\documentclass[12pt,a4paper]{article}

\textwidth 16.0 cm

\textheight 22 cm

\topmargin -0.5 in

\oddsidemargin 0.0 in

\usepackage{amssymb}
\usepackage{amsmath,amsthm,amssymb,color}
\setcounter{tocdepth}{3}
\usepackage{graphicx}
\date{}
\usepackage{url}
\urldef{\mailsa}\path|{alfred.hofmann, ursula.barth, ingrid.haas, frank.holzwarth,|
\urldef{\mailsb}\path|anna.kramer, leonie.kunz, christine.reiss, nicole.sator,|
\urldef{\mailsc}\path|erika.siebert-cole, peter.strasser, lncs}@springer.com|

\pagestyle{plain}

\newtheorem{lem}{Lemma}[section]
\newtheorem{thm}[lem]{Theorem}
\newtheorem{prop}[lem]{Proposition}
\newtheorem{coro}[lem]{Corollary}

\newtheorem{rmk}[lem]{Remark}

\newcommand{\uhom}[1]{\underline{\mathrm{Hom}}}

\newcommand{\arr}[2]{\begin{array}{#1}#2\end{array}}

\begin{document}

\title{On constructions and properties of $(n,m)$-functions with maximal number of bent components}

\author{Lijing Zheng~$^{\diamond}$, Jie Peng~$^{*}$, Haibin Kan~$^{\diamond }$, Yanjun Li$~^{*}$, and  Juan Luo
\thanks{ J. Peng, Y. Li, and J. Luo are with Mathematics and Science College of
Shanghai Normal University, Shanghai, China, 200234, (E-mails:~\{jpeng,luojuan\}@shnu.edu.cn, yanjlmath90@163.com).
\newline \indent $^{\diamond}$ L. Zheng and H. Kan are with the Department of Computer Sciences, Fudan University,
Shanghai, 200433, China; L. Zheng is also with School of Mathematics and Physics, University of South China, Hengyang, Hunan, 421001, China, (E-mails:~zhenglijing817@163.com, hbkan@fudan.edu.cn).}}

\maketitle
\begin{abstract}
For any positive integers $n=2k$ and $m$ such that $m\geq k$, in this paper we show the maximal number of bent components of any $(n,m)$-functions is equal to $2^{m}-2^{m-k}$, and for those attaining the equality, their algebraic degree is at most $k$. It is easily seen that all $(n,m)$-functions of the form $G(x)=(F(x),0)$ with $F(x)$ being any vectorial bent $(n,k)$-function, have the maximum number of bent components. Those simple functions $G$ are called trivial in this paper. We show that for a power $(n,n)$-function, it has such large number of bent components if and only if it is trivial under a mild condition. We also consider the $(n,n)$-function of the form $F^{i}(x)=x^{2^{i}}h({\rm Tr}^{n}_{e}(x))$, where $h: \mathbb{F}_{2^{e}} \rightarrow \mathbb{F}_{2^{e}}$, and show that $F^{i}$ has such large number if and only if $e=k$, and $h$ is a permutation over $\mathbb{F}_{2^{k}}$. It proves that all the previously  known nontrivial such functions are subclasses of the functions $F^{i}$.  Based on the Maiorana-McFarland class, we present constructions of large numbers of $(n,m)$-functions with maximal number of bent components for any integer $m$ in bivariate representation.  We also determine the differential spectrum and Walsh spectrum of the constructed functions. It is found that our constructions can also provide new plateaued vectorial functions.
\end{abstract}

\noindent {\bf Index Terms:} Bent functions, vectorial bent, algebraic degree, differential uniformity, Walsh spectrum.
\medskip

\noindent {\bf MSC:} 94A60, 11T71, 14G50.

\section{Introduction}

Throughout this paper, we often identify the finite field $\mathbb{F}_{2^{n}}$ with $\mathbb{F}^{n}_{2}$, the $n$-dimensional vector space over $\mathbb{F}_{2}$. Any function $F: \mathbb{F}_{2^{n}}\rightarrow\mathbb{F}_{2^{m}}$ is called an {\it $(n,m)$-function}. When $m=1$ it is usually called a {\it Boolean function}.
{\it Bent functions}, as a special class of Boolean functions, were introduced by Rothaus \cite{Rothaus} in 1976. A bent function is a Boolean function in even number of variables which is maximally nonlinear in the sense that its Hamming distance to all affine functions is optimal. It corresponds to the fact that the {\it Walsh transform} of a bent function in $n$ variables takes precisely the values $\pm 2^{\frac{n}{2}}$. Over the last four decades, bent functions have attracted a lot of research interest because of their relations to coding theory and applications in cryptography. A survey on bent functions can be found in \cite{Car-Mes16} as well as the book \cite{Mes16}.

The bent property of Boolean functions has been extended to general $(n,m)$-functions by requesting that all the nonzero linear combinations of the {\it coordinate functions} of $F$ are bent functions. Such functions are called {\it vectorial bent}. The construction of vectorial bent functions has been initially considered by Nyberg \cite{Nyb}. It was shown in \cite{Nyb} that vectorial bent $(n,m)$-functions exist if and only if $n$ is even and $m\leq n/2$. Such functions can be constructed by using some known classes of bent functions, such as the {\it Maiorana-McFarland class}~(denoted by $\mathcal{M}$) and the Dillon's {\it partial spread class} (denoted by $\mathcal{PS}$).  In the finite field version, the bent property of $(n,m)$-functions $F: \mathbb{F}_{2^{n}} \rightarrow \mathbb{F}_{2^{m}}$ with $\mathbb{F}_{2^{m}}\subseteq \mathbb{F}_{2^{n}}$ can be established directly as it was done in \cite{MPS} for the trace functions ${\rm Tr}^{n}_{k}(\sum\limits^{2^{k}}_{i=0}\alpha_{i}x^{i(2^{k}-1)})$, where $n=2k$ and $\alpha_{i}\in \mathbb{F}_{2^{n}}$, see \cite{Mes16} for more constructions of vectorial bent functions.

An interesting and important problem concerning bent functions and vectorial bent functions is that, how large could the number of {\it bent components} of an $(n,m)$-function be ($n$ being even)? Thanks to Nyberg's result, the largest value of this number is $2^m-1$ when $m\leq n/2$. While for $m>n/2$, the answer is unknown before. Very recently, Pott et al. studied this number for the case $m=n$. In \cite{Pott} they showed that this number is upper bounded by $2^{n}-2^{n/2}$, which is attained by the functions of the form $F(x)=x^{2^{i}}(x+x^{2^{k}})$, where $n=2k, 0 \leq i\leq n-1$. They also pointed out that vectorial
functions with maximal number of bent components can produce new vectorial bent functions. More precisely, for any $(n,n)$-function $G$, if ${\rm Tr}^{n}_{1}(\alpha G(x))$ is bent for any $\alpha\in\mathbb{F}_{2^{n}}\setminus\mathbb{F}_{2^{k}}$, then ${\rm Tr}^{n}_{k}(\alpha G(x))$ is a vectorial bent function for any $\alpha \in \mathbb{F}_{2^{n}}\setminus\mathbb{F}_{2^{k}}$. Therefore, ${\rm Tr}^{n}_{k}(\gamma F(x))$ is vectorial bent for any  primitive element $\gamma$ in $\mathbb{F}^{\ast}_{2^{n}}$ and the functions $F$ listed above. Later, Mesnager et al. presented a class of $(n,n)$-functions with maximal number of bent components of the form $H(x)=x^{2^{i}}(x+x^{2^{k}}+\sum\limits^{\rho}_{j=1}\gamma^{(j)}(x^{2^{t_{j}}}+x^{2^{t_{j}+k}}))$ under some additional conditions, where $\gamma^{(j)}\in \mathbb{F}_{2^{k}}$ \cite{MZTZ}. In the same paper, the authors also proved that APN plateaued functions cannot have the maximum number of bent components. In some earlier works, the number of bent components of APN functions and differentially 4-uniform functions were discussed and some lower bounds were obtained (see \cite{Bcc,Cp}).

Let $n=2k$, and $m\geq k$. In this paper, at first we show that the maximal number of bent components of any $(n,m)$-function is equal to $2^{m}-2^{m-k}$, and when the equality holds, the \emph{algebraic degree} of such functions is at most $k$. For any vectorial bent $(n,k)$-function $F(x)$, with the inclusion map $i: \mathbb{F}^{k}_{2}\rightarrow \mathbb{F}^{m}_{2}$, $a\mapsto (a,0)$, it can also be regarded as an $(n,m)$-function which has the maximal number $2^{m}-2^{m-k}$ of bent components. For this reason, we call the $(n,m)$-function $G(x)=(F(x),0)$ \emph{trivial}. Our main contributions to the constructions and characterizations of such functions are summarized as follows:

$\bullet$  The case of $(n,m)$-functions.

Based on the  Maiorana-McFarland class,  we present a nontrivial construction of such functions in {\it bivariate representation} $G(x,y)=(x^{2^{i}}\pi(y)+g(y),h(y))$, where $(x,y)\in \mathbb{F}_{2^{k}}\times \mathbb{F}_{2^{k}}$, $\pi: \mathbb{F}_{2^{k}}\rightarrow \mathbb{F}_{2^{k}}$ is a permutation, $g(y):\mathbb{F}_{2^{k}}\rightarrow \mathbb{F}_{2^{k}}$ and $h(y):\mathbb{F}_{2^{k}}\rightarrow \mathbb{F}_{2^{m-k}}$ are arbitrary functions.

$\bullet$ The case of $(n,n)$-functions.

1) Let $F(x)=x^d$. We show that  $F(x)$ has the maximal number of bent components if and only if $F(x)$ is a vectorial bent $(n,k)$-function. Say $F$ is trivial.

2) For any positive integer $e$ dividing $k$, we consider the $(n,n)$-functions in {\it univariate representation} of the form
                                               $$F^{i}(x)=x^{2^{i}}h({\rm Tr}^{n}_{e}(x)),$$
where $0\leq i\leq n-1$, $h: \mathbb{F}_{2^{e}}\rightarrow \mathbb{F}_{2^{e}}$ is an arbitrary function. We show that the functions ${\rm Tr}^{n}_{1}(\alpha F^{i}(x))$ is bent for any $\alpha \in \mathbb{F}_{2^{n}}\setminus\mathbb{F}_{2^{k}}$ if and only if $e=k$, and $h$ is a permutation over $\mathbb{F}_{2^{e}}$. Hence the functions $F^{i}(x)=x^{2^{i}}h({\rm Tr}^{n}_{k}(x))$ yield new vectorial bent functions ${\rm Tr}^{n}_{k}(\alpha F^{i}(x))$, where $h$ is a permutation over $\mathbb{F}_{2^{k}}$.

One will see the functions $F(x)=x^{2^{i}}(x+x^{2^{k}})$ given by \cite{Pott} and $H(x)=x^{2^{i}}(x+x^{2^{k}}+\sum\limits^{\rho}_{j=1}\gamma^{(j)}(x^{2^{t_{j}}}+x^{2^{t_{j}+k}}))$ given by \cite{MZTZ} are subclasses of our functions $F^{i}(x)$, which means that all the known nontrivial $(n, n)$-functions with maximum number of bent components are covered by our results, to the best of our knowledge.

The rest of the paper is organized as follows. Some basic definitions are given in Section 2. In Section 3, we prove that the maximum number of bent components of $(n=2k,m)$-functions is equal to $2^m-2^{m-k}$, and when the equality holds, the algebraic degree of such functions is at most $k$. We give a characterization of the power $(n,n)$-functions admitting such large number of bent components. In Section 4,  we put forward a construction of $(n,m)$-functions in bivariate representation with maximum number of bent components, and determine the {\it differential spectrum} and {\it Walsh spectrum}. We also show that the $(n,n)$-functions $F^{i}(x)=x^{2^{i}}h({\rm Tr}^{n}_{e}(x))$ has the maximal number of bent components if and only if $e=k$, and $h$ is a permutation over $\mathbb{F}_{2^{e}}$, where $n=2k$, and $e$ is a positive integer dividing $k$, $h$ is an arbitrary map from $\mathbb{F}_{2^{e}}$ to $\mathbb{F}_{2^{e}}$. This yields new vectorial bent function ${\rm Tr}^{n}_{k}(\alpha x^{2^{i}}h({\rm Tr}^{n}_{k}(x)))$, where $h$ is a permutation over $\mathbb{F}_{2^{k}}$, $\alpha$ is a primitive element in $\mathbb{F}_{2^{n}}$. Moreover, we compare our new construction with previous ones, and prove that all the known nontrivial results are covered by ours. A few concluding remarks are given in Section 5.

\section{Preliminaries}

Let $\mathbb{F}_{2^{n}}$ be the finite field consisting of $2^{n}$ elements, then the group of units of $\mathbb{F}_{2^{n}}$, denoted by $\mathbb{F}^{\ast}_{2^{n}}$, is a cyclic group of order $2^{n}-1$. Throughout this paper we always identify $\mathbb{F}_{2^{n}}$ with the vector space $\mathbb{F}^{n}_{2}$ over $\mathbb{F}_{2}$. Any function $F: \mathbb{F}_{2^{n}}\rightarrow\mathbb{F}_{2^{m}}$ is called an $(n,m)$-function ({\it cryptographic function}, {\it vectorial Boolean function} or {\it substitution box}). Usually $(n,1)$-functions are called Boolean functions in $n$ variables, the set of which is denoted by $\mathcal{B}_{n}$. As cryptographic functions, the main cryptographic criteria for Boolean functions are algebraic degree, {\it nonlinearity} and {\it algebraic immunity}, etc. (see \cite{Carletb1}), and the main cryptographic criteria for vectorial Boolean functions are algebraic degree, nonlinearity and {\it differential uniformity}, etc. (see \cite{Carletb2}).

The trace function ${\rm Tr}^{n}_{m}: \mathbb{F}_{2^{n}}\rightarrow \mathbb{F}_{2^{m}}$, where $m|n$, is defined as
$${\rm Tr}^{n}_{m}(x)\!=\!x\!+\!x^{2^{m}}\!+\!x^{2^{2m}}\!+\!\cdots\!+\!x^{2^{(n/m\!-\!1)m}}, ~\forall ~x\in \mathbb{F}_{2^{n}}.$$ When $m=1$, it is also called the absolute trace. In this paper, $\langle,\rangle$ denotes the usual {\it inner product} in a vector space over $\mathbb{F}_{2}$. For any $\alpha=(\alpha_1,\ldots,\alpha_n), \beta=(\beta_1,\ldots,\beta_n)\in\mathbb{F}_2^{n}$, one has $\langle\alpha,\beta\rangle=\sum_{i=1}^n\alpha_i\beta_i$. While in the finite field $\mathbb{F}_{2^{n}}$, we take $\langle\alpha,\beta\rangle={\rm Tr}^{n}_{1}(\alpha\beta)$ for any $\alpha,\beta\in\mathbb{F}_{2^{n}}$.

 For any $(n,m)$-function $F=(f_1,\ldots,f_m)$, where $f_1,\ldots,f_m\in\mathcal{B}_{n}$, all the nonzero linear combinations of
 $f_i, 1\leq i\leq m$ are
called the components of $F$. When $F$ is viewed as a mapping from the finite field $\mathbb{F}_{2^{n}}$ to $\mathbb{F}_{2^{m}}$,
the components of $F$ can be represented as ${\rm Tr}_1^m(\lambda F(x)), ~\lambda\in\mathbb{F}_{2^{m}}^*.$

A Boolean function $f\in\mathcal{B}_{n}$ can
be uniquely represented by a multivariate polynomial as
\begin{eqnarray*}
f(x_1,\ldots,x_n)\!=\!\sum_{I\subseteq
\{1,2,\ldots,n\}}a_I\prod_{i\in I}x_i,
\end{eqnarray*} where the
number of variables in the highest order term with nonzero
coefficient is called the algebraic degree of $f$. While for a general $(n,m)$-function $F$, the highest algebraic degree of its coordinate functions is called the algebraic degree of $F$. The function $F$ is called {\it quadratic} if its algebraic degree is no more than 2.

The {\it Walsh transform} of a Boolean function $f\in \mathcal{B}_{n}$ at a point $\lambda \in \mathbb{F}_{2^{n}}$ is defined by
$$W_{f}(\lambda)\!=\!\sum\limits_{x\in \mathbb{F}_{2^{n}}}(\!-\!1)^{f(x)\!+\!{\rm Tr}^{n}_{1}(\lambda x)}.$$
 A function $f\in \mathcal{B}_{n}$ is called bent if $|W_{f}(\lambda)|=2^{\frac{n}{2}}$ for all $\lambda \in \mathbb{F}_{2^{n}}$. It is well known that bent functions exist if and only if $n$ is even. To find bent functions is a hard task. So far the known largest class of bent functions is the so called Maiorana-McFarland class of bent functions~(denoted by $\mathcal{M}$) \cite{McF}, which are of the form $f(x,y)={\rm Tr}^{k}_{1}(\pi(y)x)+g(y)$, $(x,y)\in \mathbb{F}_{2^{k}}\times  \mathbb{F}_{2^{k}}$, where $k$ is some positive integer, $\pi$ is a permutation over $\mathbb{F}_{2^{k}}$ and $g$ is an arbitrary Boolean function in $k$ variables. The function $f$ is called {\it plateaued} when either it is bent or $W_f$ takes three values
 $\{0,\pm 2^{s}\}$ for some integer $n/2< s\leq n$.

The nonlinearity of an $(n,m)$-function $F$ and hereby its resistance to {\it linear cryptanalysis} \cite{Matsui} is measured through the {\it extended Walsh spectrum}
$$\{\ast~ |W_{F}(\sigma,\gamma)|:\gamma \in \mathbb{F}^{m}_{2}\backslash\{0\},\sigma \in \mathbb{F}^{n}_{2} ~\ast\},$$
where
$$W_{F}(\sigma,\gamma)\!=\!\sum\limits_{x\in  \mathbb{F}^{n}_{2}}(\!-\!1)^{\langle \sigma,x\rangle\! +\!\langle \gamma,F(x)\rangle }.$$
The function $F$ is said to be a {\it vectorial~bent~function} of dimension $m$ if all the components of $F$ are bent. In other words, $F$ is vectorial bent if and only if $|W_{F}(\sigma,\gamma)|=2^{n/2}$, for any $\gamma \in \mathbb{F}^{m}_{2}\backslash\{0\}$ and for any $\sigma\in\mathbb{F}^{n}_{2}$. $F$ is said to be a {\it plateaued vectorial function} if all the components are plateaued Boolean functions.

For an $(n,m)$-function $F$, denote by $\delta_{F}(a,b)$ the number of solutions $x\in\mathbb{F}^{n}_{2}$ such that $F(x+a)+F(x)=b$, for any $a\in\mathbb{F}^{n}_{2}$ and $b\in\mathbb{F}^{m}_{2}$.
The multiset
$$\{\ast ~\delta_{F}(a,b):a\in \mathbb{F}^{n}_{2}\backslash \{0\}, b\in\mathbb{F}^{m}_{2}~ \ast \},$$
is called the {\it differential spectrum} of $F$, and $F$ is said to be differentially $\delta$-uniform, where $\delta=\max_{a\in \mathbb{F}^{n}_{2}\backslash \{0\}, b\in\mathbb{F}^{m}_{2}}\delta_{F}(a,b)$.
In particular, $F$ is called {\it almost perfect nonlinear}~(${\rm APN}$ for short)~if $\delta=2$. For cryptographic applications, it is usually required that the differential uniformity of $F$ should be as low as possible (see for instance \cite{Pt} and the references therein).

Two $(n,m)$-functions $F$ and $G$ are called {\it extended affine equivalent} (EA-equivalent) if there exist some affine permutation $L_1$ over $\mathbb{F}_{2^n}$ and some affine permutation $L_2$ over $\mathbb{F}_{2^m}$, and some affine function $A$ such that $F=L_2\circ G\circ L_1+A$. They are called {\it Carlet-Charpin-Zinoviev equivalent} (CCZ-equivalent) if there exists some affine automorphism $L=(L_1, L_2)$ of $\mathbb{F}_{2^n}\times \mathbb{F}_{2^m}$, where $L_1: \mathbb{F}_{2^n}\times \mathbb{F}_{2^m}\rightarrow \mathbb{F}_{2^n}$ and $L_2: \mathbb{F}_{2^n}\times \mathbb{F}_{2^m}\rightarrow \mathbb{F}_{2^m}$ are affine functions, such that $y=G(x)$ if and only if $L_2(x, y)=F\circ L_1(x, y)$. It is well known that EA-equivalence is a special kind of CCZ-equivalence, and that  CCZ-equivalence preserves the extended Walsh spectrum and the differential spectrum (but not for algebraic degree) \cite{CCZ}.

\section{$(n,m)$-functions with Maximal Number of Bent Components}

In this section, we will determine the tight upper bound for the number of bent components of general $(n,m)$-functions, where $n$ is even and $m\geq \frac{n}{2}$, and present a trivial construction of $(n,m)$-functions which reach this bound.

\begin{thm}\cite{Bose-Burton} \label{Bose-Burton} Let $\mathbb{F}_{q}$ be the finite field of order $q=p^s$ and $PG(n,q)$ the projective space of dimension $n$ over $\mathbb{F}_{q}$. A set $S$ of points in $PG(n,q)$ that meets all $(n-k)$-dimensional subspaces of $PG(n,q)$ has at least $\frac{q^{k+1}-1}{q-1}$ points with equality if and only if $S$ is a subspace of dimension $k$.
\end{thm}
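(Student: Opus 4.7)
My plan is induction on the ambient projective dimension $n$, holding $k$ fixed. The base case $n = k$ is immediate: in $PG(k, q)$ the $(n-k)$-flats are precisely the points, so the blocking condition forces $S = PG(k, q)$, which has $(q^{k+1}-1)/(q-1)$ points and coincides with the unique $k$-dimensional subspace. For the inductive step with $n > k$, if $S = PG(n, q)$ the bound holds with strict inequality; otherwise fix a point $p \notin S$ and let $\pi \colon PG(n, q) \setminus \{p\} \to PG(n-1, q)$ be projection from $p$. Setting $\tilde S := \pi(S)$, the key observation is that $\tilde S$ blocks every $(n-k-1)$-flat $\tilde V$ of $PG(n-1, q)$: the preimage $\pi^{-1}(\tilde V) \cup \{p\}$ is an $(n-k)$-flat of $PG(n, q)$ through $p$, hence meets $S$, and the intersection point (which is not $p$, since $p \notin S$) projects into $\tilde V$. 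The inductive hypothesis gives $|\tilde S| \geq (q^{k+1}-1)/(q-1)$, so $|S| \geq |\tilde S| \geq (q^{k+1}-1)/(q-1)$.

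For the equality case, suppose $|S| = (q^{k+1}-1)/(q-1)$. Then $\pi$ is injective on $S$ and, by induction, $\tilde S$ is a $k$-flat of $PG(n-1, q)$. Therefore $S$ lies in the cone $K$ over $\tilde S$ through $p$, which is a $(k+1)$-flat of $PG(n, q)$. A dimension count shows that every line $L$ of $K$ extends to an $(n-k)$-flat $V$ of $PG(n, q)$ with $V \cap K = L$: choose an $(n-k-1)$-dimensional subspace whose image in the quotient by $L$ is complementary to $K/L$, and set $V = L + U$, so that $\dim(V \cap K) = \dim V + \dim K - \dim(V + K) = (n-k) + (k+1) - n = 1$. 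Consequently $S \cap L = S \cap V \cap K = S \cap V \neq \emptyset$, so $S$ is a line-blocking set of minimum size $(q^{k+1}-1)/(q-1)$ inside $K \cong PG(k+1, q)$.

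It remains to show directly that such a set must be a hyperplane of $K$. Let $T := K \setminus S$, so $|T| = q^{k+1}$ and $T$ contains no full line of $K$. Fix any $t_0 \in T$; each of the $(q^{k+1}-1)/(q-1)$ lines of $K$ through $t_0$ contains at most $q$ points of $T$ (otherwise the whole line would lie in $T$), and each $t \in T \setminus \{t_0\}$ lies on exactly one such line. Hence $|T| - 1 \leq (q-1) \cdot (q^{k+1}-1)/(q-1) = q^{k+1} - 1$; equality holds throughout, which forces every line of $K$ through $t_0$ to meet $S$ in exactly one point. Applying the same argument at an arbitrary $t_1 \in T$, any line joining two points of $S$ cannot pass through a point of $T$, so it lies entirely in $S$. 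Thus $S$ is closed under projective spans inside $K$, hence is a subspace; its cardinality $(q^{k+1}-1)/(q-1)$ pins it down as a $k$-flat of $K$, hence of $PG(n, q)$.

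The main obstacle is organizing the equality case so that the induction bottoms out cleanly. The lower bound itself is a routine projection argument, but the equality statement requires two careful reductions: first recognizing that $S$ is contained in the cone $K$ (which rests on the inductive claim that $\tilde S$ is already a $k$-flat and on the injectivity of $\pi|_S$), and second verifying the dimension argument that every line of $K$ arises as $V \cap K$ for some $(n-k)$-flat $V$ of the ambient space. With these reductions in place, the remaining characterization of minimum line-blocking sets in $PG(k+1, q)$ is an elementary point-counting argument and does not require a fresh induction.
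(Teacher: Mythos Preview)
The paper does not prove this theorem; it is quoted verbatim from Bose and Burton (1966) and used as a black box to obtain Lemma~\ref{key} and Theorem~\ref{number}. There is therefore no ``paper's own proof'' to compare against.

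Your argument is correct and is essentially the classical one. The projection-from-a-point induction gives the lower bound cleanly, and your handling of the equality case is sound: injectivity of $\pi|_S$ together with the inductive hypothesis traps $S$ inside a $(k+1)$-flat $K$; the dimension computation $\dim(V\cap K)=(n-k)+(k+1)-n=1$ (valid once $V+K$ is the whole space, which your complement construction guarantees) shows that $S$ blocks every line of $K$; and the double-count on $T=K\setminus S$ correctly forces every line through a point of $T$ to meet $S$ exactly once, whence any secant to $S$ lies in $S$ and $S$ is a $k$-flat. One small wording issue: the phrase ``choose an $(n-k-1)$-dimensional subspace $U$ whose image in the quotient by $L$ is complementary to $K/L$'' is ambiguous about whether the dimension is linear or projective and about where $U$ lives; what you need (and clearly intend) is that $V$ be the full preimage in the ambient space of a complement to $K/L$ in the quotient by $L$, which automatically has projective dimension $n-k$ and satisfies $V\cap K=L$. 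With that clarified, the proof is complete.
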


As it has been pointed out by the authors of \cite{Pott}, one can get the following Lemma when applying Theorem \ref{Bose-Burton} to the case of $p=2$.

\begin{lem}\label{key}\cite[Corollary 1]{Pott}. A set $S$ of elements in $\mathbb{F}_{2}^{n}\backslash\{0\}$ meeting all $(n+1-k)$-dimensional subspaces of $\mathbb{F}_{2}^{n}$ has at least $2^k-1$ elements with equality if and only if $S\cup \{0\}$ is a $k$-dimensional subspace of $\mathbb{F}_{2}^{n}$.
\end{lem}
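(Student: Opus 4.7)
The plan is to deduce the lemma directly from the Bose-Burton Theorem by passing to the projective geometry $PG(n-1,2)$ naturally associated with $\mathbb{F}_2^n$. Under the standard dictionary over the prime field of order two, the points of $PG(n-1,2)$ are exactly the nonzero vectors of $\mathbb{F}_2^n$ (each $1$-dimensional $\mathbb{F}_2$-subspace contains a unique nonzero element), and a projective subspace of projective dimension $d-1$ corresponds bijectively to $U \setminus \{0\}$ where $U$ is a $d$-dimensional linear subspace of $\mathbb{F}_2^n$. Consequently, a subset $S \subseteq \mathbb{F}_2^n \setminus \{0\}$ meets every $(n+1-k)$-dimensional vector subspace of $\mathbb{F}_2^n$ if and only if $S$, regarded as a set of points of $PG(n-1,2)$, meets every projective subspace of dimension $(n+1-k)-1 = n-k$.

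Next, I would apply Theorem~\ref{Bose-Burton} with ambient projective dimension $N = n-1$ and $q = 2$. The theorem quantifies point sets that meet every $(N-K)$-dimensional subspace; matching $N-K = n-k$ forces $K = k-1$, so the resulting lower bound is
$$\frac{q^{K+1}-1}{q-1} \;=\; \frac{2^{k}-1}{1} \;=\; 2^{k}-1,$$
which is precisely the bound claimed. For the equality case, Bose-Burton asserts that equality holds if and only if $S$ is itself a projective subspace of dimension $K = k-1$; translating back through the point-to-subspace dictionary yields $S = U \setminus \{0\}$ for some $k$-dimensional linear subspace $U$ of $\mathbb{F}_2^n$, i.e., $S \cup \{0\}$ is a $k$-dimensional subspace of $\mathbb{F}_2^n$.

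There is essentially no substantive obstacle: the entire argument is a specialization of Bose-Burton together with a translation of dimensions. The only pitfall is keeping the two dimension conventions aligned — the ambient projective space has dimension $n-1$, not $n$, and vector subspaces of dimension $d$ correspond to projective subspaces of dimension $d-1$ — so one must verify that the ``$(n+1-k)$'' in the hypothesis and the ``$2^{k}-1$'' in the conclusion are genuinely compatible. Once the parameters are lined up, both the bound and the characterization of the extremal configurations fall out immediately, and no additional idea beyond the cited theorem is required.
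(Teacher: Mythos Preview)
Your proposal is correct and is exactly the approach the paper indicates: the lemma is stated as the specialization of the Bose--Burton Theorem (Theorem~\ref{Bose-Burton}) to $q=2$, and your dimension bookkeeping (ambient projective dimension $n-1$, $K=k-1$) matches precisely. The paper gives no further details beyond this, so your write-up in fact supplies more than the paper does.
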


\begin{thm}\label{number} Let $n=2k$, $m$ be  positive integers with $m\geq k$, and $F$ an $(n,m)$-function. Denote by $S=\{v \in\mathbb{F}_{2}^{m}: x\mapsto \langle v,F(x)\rangle \text{is not bent}\}.$
Then $|S|\geq 2^{m-k}$, with equality if and only if $S$ is an $(m-k)$-dimensional subspace of $\mathbb{F}_{2}^{m}$. In particular, the maximal number of bent components of $F$ is $2^{m}-2^{m-k}$.
\end{thm}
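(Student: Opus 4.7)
The plan is to combine Nyberg's classical bound on vectorial bent functions with the Bose--Burton-type covering Lemma~\ref{key}. Since the constant function $0$ is not bent, the zero vector automatically lies in $S$, so I will work with $T := S \setminus \{0\} \subseteq \mathbb{F}_{2}^{m} \setminus \{0\}$ and aim to prove $|T| \geq 2^{m-k}-1$, with equality iff $T \cup \{0\}$ is an $(m-k)$-dimensional subspace.

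The heart of the argument is a \emph{forbidden subspace} claim: every $(k+1)$-dimensional $\mathbb{F}_{2}$-subspace $U \subseteq \mathbb{F}_{2}^{m}$ must meet $T$. To see this, fix a basis $e_{1}, \ldots, e_{k+1}$ of $U$ and form the auxiliary $(n, k+1)$-function
$$F'(x) = \bigl(\langle e_{1}, F(x)\rangle, \ldots, \langle e_{k+1}, F(x)\rangle\bigr).$$
By linearity, the nonzero components of $F'$ are exactly the Boolean functions $\langle v, F\rangle$ with $v$ ranging over $U \setminus \{0\}$. If $U$ avoided $T$, all of these components would be bent, making $F'$ a vectorial bent $(n, k+1)$-function; but since $n = 2k$ and $k+1 > n/2$, this contradicts Nyberg's existence theorem recalled in the introduction. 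Hence $U \cap T \neq \emptyset$.

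With this covering property in hand, I invoke Lemma~\ref{key} applied to $T$, instantiating the lemma's ambient dimension as $m$ and its parameter $k$ (in the statement of the lemma) as $m-k$, so that the relevant forbidden subspace dimension becomes $m + 1 - (m - k) = k+1$, matching what was established above. The lemma then yields $|T| \geq 2^{m-k} - 1$, with equality iff $T \cup \{0\} = S$ is an $(m-k)$-dimensional subspace of $\mathbb{F}_{2}^{m}$. Since the bent components of $F$ are indexed by $\mathbb{F}_{2}^{m} \setminus S$, the maximal number of bent components is $2^{m} - |S| \leq 2^{m} - 2^{m-k}$, as claimed.

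No serious obstacle is anticipated. The one subtle point is to verify that the correspondence $v \mapsto \langle v, F\rangle$ on $U\setminus\{0\}$ is a bijection onto the set of components of the induced function $F'$, so that ``non-bent component'' translates cleanly into ``membership in $T$'' and the problem becomes combinatorial. Once this bridge is built, Nyberg's theorem together with Lemma~\ref{key} delivers both the bound and the equality characterization with no further computation.
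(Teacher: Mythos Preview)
Your proposal is correct and follows essentially the same approach as the paper's proof: both use Nyberg's theorem to establish that the non-bent set $S\setminus\{0\}$ must meet every $(k+1)$-dimensional subspace of $\mathbb{F}_{2}^{m}$, and then invoke Lemma~\ref{key} to obtain the size bound together with the equality characterization. The only cosmetic difference is that you apply Lemma~\ref{key} directly, whereas the paper phrases the inequality step by contradiction.
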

\begin{proof}  With similar arguments as in \cite[Theorem 3]{Pott}, one can obtain the desired assertions. For the convenience of the readers, we give the proof here. If $|S|<2^{m-k}$, then there are at most $2^{m-k}-2$ nonzero elements $v$ for which $x\mapsto \langle v,F(x)\rangle$ is not bent. Due to Lemma \ref{key}, $S$ cannot meet all subspaces of dimension $m+1-(m-k)=k+1$, hence there exists some subspace $T$ of dimension  $k+1$ disjoint from $S\backslash \{0\}$. Then we have  $T\cap (S\backslash \{0\})=\emptyset$, and hence $T\subseteq \mathbb{F}_{2}^{m} \backslash  (S\backslash \{0\})$. Therefore $T\subseteq \{v \in\mathbb{F}_{2}^{m}: x\mapsto \langle v,F(x)\rangle \text{~is bent}\}\cup\{0\},$ that is, there is a vectorial bent function from $\mathbb{F}_{2}^{n}$ to $\mathbb{F}_{2}^{k+1}$, which is impossible due to Nyberg's theorem \cite{Nyb} on vectorial bent function which says that vectorial bent $(n,m)$-functions exist if and only if $n$ is even and $m\leq \frac{n}{2}$.
\end{proof}

It is well known that the algebraic degree of a bent function in $n$ variables is upper bounded by $n/2$. This upper bound is tight, and is valid for vectorial bent functions as well. In fact, it is also true for all $(n,m)$-functions with maximum number of bent components.

\begin{thm}\label{deg} Let $n=2k$, $m\geq k$ be positive integers. Let $F$ be an $(n,m)$-function with maximum number of bent components, then its algebraic degree is at most $k$.
\end{thm}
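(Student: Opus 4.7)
The plan is to leverage Theorem~\ref{number} from the previous section. Since $F$ attains the maximum number $2^m-2^{m-k}$ of bent components, the set $S=\{v\in\mathbb{F}_2^m:x\mapsto\langle v,F(x)\rangle\text{ is not bent}\}$ must satisfy the equality case, so $S$ is an $(m-k)$-dimensional \emph{linear subspace} of $\mathbb{F}_2^m$. This subspace structure is the critical ingredient, and the rest of the argument is essentially a degree-chase that I would split into two cases according to whether a direction $v$ lies in $S$ or not.

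For any $v\in\mathbb{F}_2^m\setminus S$, the Boolean function $x\mapsto\langle v,F(x)\rangle$ is bent on $\mathbb{F}_2^n$, hence by Rothaus' classical bound its algebraic degree is at most $n/2=k$. This handles all ``good'' directions for free.

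The only work is to control the degree of $\langle v,F(\cdot)\rangle$ for $v\in S\setminus\{0\}$. Here I would exploit the fact that $S$ is a proper subspace (which uses $k\geq 1$, so $S\neq\mathbb{F}_2^m$): pick any $u\in\mathbb{F}_2^m\setminus S$. Because $S$ is closed under addition, $v+u\notin S$ as well, since otherwise $u=v+(v+u)\in S$, a contradiction. Writing
\[
\langle v,F(x)\rangle=\langle u,F(x)\rangle+\langle v+u,F(x)\rangle
\]
expresses $\langle v,F\rangle$ as a sum of two bent components, each of algebraic degree at most $k$ by the previous paragraph. Hence $\deg\langle v,F\rangle\leq k$ for every $v\in\mathbb{F}_2^m\setminus\{0\}$.

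To conclude, I would recall that the algebraic degree of a vectorial function equals the maximum of the algebraic degrees of its coordinate functions, equivalently the maximum of $\deg\langle v,F\rangle$ over nonzero $v\in\mathbb{F}_2^m$. Combining the two cases gives $\deg F\leq k$. The main conceptual obstacle is really just spotting the subspace decomposition trick $v=u+(v+u)$ with both summands outside $S$; everything else reduces to Rothaus' degree bound and the structural information supplied by Theorem~\ref{number}.
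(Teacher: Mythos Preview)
Your proposal is correct and follows essentially the same approach as the paper: express each non-bent component as the sum of two bent components (using that $S$ is a subspace so $v\in S$, $u\notin S$ forces $v+u\notin S$), then apply Rothaus' degree bound. The paper's proof is terser but identical in substance.
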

\begin{proof} For any non-bent component $f$ of $F$, let $g$ be a bent component, then by Theorem \ref{number} $f+g$ is bent. One then immediately gets the desired assertion by the upper bound for the algebraic degree of bent functions.
\end{proof}

For any vectorial bent $(n,k)$-function $F(x)$ ($n=2k$), with the inclusion map $i: \mathbb{F}^{k}_{2}\rightarrow \mathbb{F}^{m}_{2}$, $a\mapsto (a,0)$, it can also be regarded as an $(n,m)$-function which can be easily seen to have the maximal number $2^{m}-2^{m-k}$ of bent components. For this reason, in this paper we call such $(n,m)$-function $G(x)=(F(x),0)$ trivial.

As it has been pointed out by Pott et al., for an $(n,n)$-function $F$ with maximal number of bent components,  it can produce many vectorial bent $(n,k)$-functions \cite[Proposition 3]{Pott}. Explicitly, if ${\rm Tr}^{n}_{1}(\alpha F(x))$ is bent for any $\alpha \in \mathbb{F}_{2^{n}}\backslash \mathbb{F}_{2^{k}}$, then ${\rm Tr}^{n}_{k}(\alpha F(x))$ is a vectorial bent function for any $\alpha \in \mathbb{F}_{2^{n}}\backslash \mathbb{F}_{2^{k}}$. With this observation, we revisit the following $(n,n)$-function $F(x)$  which is essentially due to Leander and Kholosha \cite{Leander-Kholosha}.

\noindent {\bf Example 1.} Let $n=2k$, $F(x)=\sum\limits^{2^r-1}_{i=1}x^{(i2^{k-r}+1)(2^k-1)+1}$ with $1 \leq r\leq k$, and ${\rm gcd}(r,k)=1$. It has been observed by Li et al. that, for any $\alpha \in \mathbb{F}_{2^{n}}\backslash \mathbb{F}_{2^{k}}$, ${\rm Tr}^{n}_{1}(\alpha F(x))$ is bent \cite[Theorem 2]{Li2013}. Thus, by Theorem \ref{number}, $F$ has $2^{n}-2^k$ bent components, and  ${\rm Tr}^{n}_{1}(\alpha F(x))$ is bent if and only if $\alpha \in \mathbb{F}_{2^{n}}\backslash \mathbb{F}_{2^{k}}$. One will see that $F(x)$ is in fact a vectorial bent $(n,k)$-function. Indeed, for $1\leq i\leq 2^{r}-1$ denote $(i2^{k-r}+1)(2^k-1)+1$ by $d_{i}$, one has $F(x)=x^{(2^{k-1}+1)(2^k-1)+1}+\sum\limits^{2^{r-1}-1}_{i=1}(x^{d_{i}}+x^{d_{i}2^{k}})$. Note that $2(1+(2^{k-1}+1)(2^k-1))\equiv 2^k+1~{\rm mod}~(2^{2k}-1)$, the element $F(x)$ belongs to $\mathbb{F}_{2^k}$ for any $x\in \mathbb{F}_{2^n}$. Choosing $\alpha\in \mathbb{F}_{2^{n}}$ such that $\alpha+\alpha^{2^k}=1$, then ${\rm Tr}^{n}_{k}(\alpha F(x))=F(x)$ is a vectorial bent function by \cite[Proposition 3]{Pott}. This is why $F(x)$ can have the maximal number of bent components if one had known that $F(x)$ is a vectorial bent $(n,k)$-function in advance, and in our words $F(x)$ is trivial.

 Below we will examine the case of the power $(n,n)$-functions, and give an observation on when it admits the maximal number of bent components. It turns out that if this is the case, it is trivial, i.e., it must be a vectorial bent $(n,k)$-function. Note that for any such $(n,n)$-function $F(x)$, let $S(F)=\{v\in \mathbb{F}_{2^{n}}~|~{\rm Tr}^{n}_{1}(v F(x)) \text{ is not bent} \}$, $S$ is a $k$-dimensional subspace of $\mathbb{F}_{2^{n}}$. Then by the fact that $S$ is isomorphic to $\mathbb{F}_{2^{k}}\subseteq \mathbb{F}_{2^{n}}$ as the vector spaces over $\mathbb{F}_{2}$, one can easily see that $F(x)$ is EA-equivalent to a function $G(x)$ with $S(G)=\mathbb{F}_{2^{k}}$. It indicates that one only needs to consider those functions $G$ with $S(G)=\mathbb{F}_{2^{k}}$ in the sense of EA-equivalent.

\begin{thm} Let $n=2k$, $s$ be  positive integers, and $F(x)=x^s$ an $(n,n)$-function. If ${\rm Tr}^{n}_{1}(\alpha F(x))$ is a Boolean bent function for any $\alpha \in \mathbb{F}_{2^{n}}\backslash \mathbb{F}_{2^{k}}$ ($S(F)=\mathbb{F}_{2^{k}}$), then $2^k+1~|~s$, and $d=\frac{s}{2^k+1}$ satisfying that ${\rm gcd}(d,2^{k}-1)=1$. In particular, $F(x)$ admits the maximal number of bent components if and only if it is a vectorial bent $(n,k)$-function.

\end{thm}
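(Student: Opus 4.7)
The plan is to exploit the multiplicative symmetry of the monomial $F(x) = x^{s}$. For any $\gamma \in \mathbb{F}_{2^{n}}^{*}$, the substitution $x \mapsto \gamma x$ in the Walsh sum of ${\rm Tr}^{n}_{1}(\alpha x^{s})$ produces the Walsh sum of ${\rm Tr}^{n}_{1}(\alpha \gamma^{s} x^{s})$ with only its linear-term index permuted by the bijection $\lambda \mapsto \lambda \gamma$. Hence the Walsh spectra of these two Boolean functions coincide as multisets, and in particular one is bent if and only if the other is. It follows that the ``bad'' set $S(F) = \mathbb{F}_{2^{k}}$ is closed under multiplication by every $s$-th power $\gamma^{s}$. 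Taking $\alpha = 1 \in \mathbb{F}_{2^{k}}$ yields $\gamma^{s} \in \mathbb{F}_{2^{k}}$ for every $\gamma \in \mathbb{F}_{2^{n}}$, i.e.\ $\gamma^{s(2^{k}-1)} = 1$ on $\mathbb{F}_{2^{n}}^{*}$, which forces $(2^{n}-1) \mid s(2^{k}-1)$ and hence $(2^{k}+1) \mid s$.

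Writing $s = (2^{k}+1)d$, the function factors as $F(x) = N(x)^{d}$ where $N(x) = x^{2^{k}+1}$ is the norm map $\mathbb{F}_{2^{n}} \to \mathbb{F}_{2^{k}}$, so $F$ takes values in $\mathbb{F}_{2^{k}}$. Since $F(x) \in \mathbb{F}_{2^{k}}$ can be pulled out of the inner trace, I would rewrite
\[
{\rm Tr}^{n}_{1}(\alpha F(x)) = {\rm Tr}^{k}_{1}\bigl({\rm Tr}^{n}_{k}(\alpha)\, F(x)\bigr).
\]
The additive map ${\rm Tr}^{n}_{k}\colon \alpha \mapsto \alpha + \alpha^{2^{k}}$ has kernel exactly $\mathbb{F}_{2^{k}}$, so it carries $\mathbb{F}_{2^{n}} \setminus \mathbb{F}_{2^{k}}$ onto $\mathbb{F}_{2^{k}}^{*}$. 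Under the hypothesis, this translates directly into the assertion that ${\rm Tr}^{k}_{1}(\beta F(x))$ is bent for every $\beta \in \mathbb{F}_{2^{k}}^{*}$, i.e.\ $F$ is a vectorial bent $(n,k)$-function.

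To pin down $e := \gcd(d, 2^{k}-1) = 1$, I would use the value distribution of $F = N^{d}$: one has $F^{-1}(0) = \{0\}$, the image of $F|_{\mathbb{F}_{2^{n}}^{*}}$ is the subgroup $(\mathbb{F}_{2^{k}}^{*})^{d}$ of index $e$ in $\mathbb{F}_{2^{k}}^{*}$, and each $c$ in this image satisfies $|F^{-1}(c)| = e(2^{k}+1)$. Parseval's identity for the Hadamard transform on $\mathbb{F}_{2^{k}}$ then equates $\sum_{\beta} |\sum_{x}(-1)^{{\rm Tr}^{k}_{1}(\beta F(x))}|^{2}$ with $2^{k} \sum_{c} |F^{-1}(c)|^{2}$. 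The left side evaluates to $2^{2n} + (2^{k}-1) 2^{n}$ by the vectorial bent property; substituting the value distribution into the right side gives a polynomial expression linear in $e$, and equating the two sides isolates $e = 1$.

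The ``in particular'' statement is then immediate: the converse direction (any vectorial bent $(n,k)$-function, regarded as an $(n,n)$-function via the inclusion $\mathbb{F}_{2^{k}} \subseteq \mathbb{F}_{2^{n}}$, is trivial with exactly $2^{n} - 2^{k}$ bent components) is standard, and the forward direction is precisely the chain of deductions above. The conceptual heart of the argument is the substitution trick in the first step, which reduces the Walsh structure to the multiplicative action of $s$-th powers; the main technical obstacle I anticipate is executing the Parseval computation cleanly enough to isolate $e$ without arithmetic slip.
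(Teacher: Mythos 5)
Your proposal is correct, and while the first step (the substitution $x\mapsto\gamma x$ showing that $S(F)$ is stable under multiplication by $s$-th powers, hence $x^s\in\mathbb{F}_{2^k}$ for all $x$ and $(2^k+1)\mid s$) is essentially the paper's argument in contrapositive form, your treatment of the condition $\gcd(d,2^k-1)=1$ takes a genuinely different route. The paper evaluates the single Walsh value $W_F(0,a+b\gamma)$ by the polar decomposition $x=yu$ with $y\in\mathbb{F}_{2^k}^*$ and $u$ in the unit circle, obtains $W_F(0,a+b\gamma)=(2^k+1)A-2^k$ with $A=\sum_{y\in\mathbb{F}_{2^k}}(-1)^{{\rm Tr}^k_1(b(\gamma+\overline{\gamma})y^{2d})}$, and forces $A=0$ from bentness, so that $y\mapsto y^d$ permutes $\mathbb{F}_{2^k}$. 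You instead first pass through the trace transitivity ${\rm Tr}^n_1(\alpha F(x))={\rm Tr}^k_1({\rm Tr}^n_k(\alpha)F(x))$ to recast the hypothesis as vectorial bentness of $F$ as an $(n,k)$-function, and then apply Parseval to the value distribution of $N^d$ (fibers of size $e(2^k+1)$ over a subgroup of index $e=\gcd(d,2^k-1)$); the resulting identity $2^{3k}+2^{2k}-2^k-1=e(2^{3k}+2^{2k}-2^k-1)$ does isolate $e=1$ as you anticipate. Your version uses bentness at $\lambda=0$ of all components rather than of a single one, but it avoids the character-sum manipulation entirely and makes the role of the norm map's fiber structure explicit; both arguments are sound, and the ``in particular'' clause is handled the same way in both (via $F(\mathbb{F}_{2^n})\subseteq\mathbb{F}_{2^k}$ and the surjectivity of ${\rm Tr}^n_k$ onto $\mathbb{F}_{2^k}^*$ off its kernel).
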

\begin{proof} Assume to the contrary, $2^k+1\not|s$, then there exists an element $x_{0}^s\not\in \mathbb{F}_{2^{k}}, $ where $x_{0}\in \mathbb{F}^{\ast}_{2^{n}}$. We have
\begin{eqnarray*}W_{F}(\lambda,x_{0}^s)
             &\!=\!& \sum\limits_{x\in \mathbb{F}_{2^{n}}}(-1)^{{\rm Tr}^{n}_{1}((x_{0}x)^s+\lambda x)}\\
             &\!=\!&  \sum\limits_{x\in \mathbb{F}_{2^{n}}}(-1)^{{\rm Tr}^{n}_{1}(x^s)+{\rm Tr}^{n}_{1}(\lambda x^{-1}_{0}x)}\\
             &\!=\!& W_{F}(\lambda x_{0}^{-1},1).
             \end{eqnarray*}
However, by the assumption and Theorem \ref{number}, one has ${\rm Tr}^{n}_{1}(\alpha F(x))$ is not  bent for any $\alpha\in \mathbb{F}_{2^{k}}$, and then the equation that  $W_{F}(\lambda,x_{0}^s)=W_{F}(\lambda x_{0}^{-1},1)$ must induce a contradiction because of  $x_{0}^{s}\in \mathbb{F}_{2^{n}}\backslash \mathbb{F}_{2^{k}} $, and $1\in \mathbb{F}_{2^{k}}$.

Denote $\frac{s}{2^k+1}$ by $d$, and $x^{2^{k}}$ by $\overline{x}$, we shall show ${\rm gcd}(d,2^{k}-1)=1$. Now $F(x)=x^s=x^{d(2^{k}+1)}=(x\overline{x})^{d}.$ Let $\gamma \in  \mathbb{F}_{2^{n}}\backslash \mathbb{F}_{2^{k}}$, then the elments $x  \in  \mathbb{F}_{2^{n}}$ can be written uniquely in the form $a+b\gamma$ with $a, b \in \mathbb{F}_{2^{k}}.$ We have

\begin{eqnarray*}W_{F}(\lambda,a+b\gamma)
             &\!=\!&\sum\limits_{x\in \mathbb{F}_{2^{n}}}(-1)^{{\rm Tr}^{n}_{1}((a+b\gamma)x^s)+{\rm Tr}^{n}_{1}(\lambda x)}\\
             &\!=\!&\sum\limits_{x\in \mathbb{F}_{2^{n}}}(-1)^{{\rm Tr}^{n}_{1}((ax^s+b\gamma x^s)+{\rm Tr}^{n}_{1}(\lambda x)}\\
             &\!=\!&\sum\limits_{x\in \mathbb{F}_{2^{n}}}(-1)^{{\rm Tr}^{n}_{1}(b\gamma x^s)+{\rm Tr}^{n}_{1}(\lambda x)}~(\text{recall~that}~ax^s\in \mathbb{F}_{2^{k}}).\\
             \end{eqnarray*}
Let $U=\{x\in \mathbb{F}_{2^{n}} ~|~x\overline{x}=1\}$, then for any $x\in \mathbb{F}^{\ast}_{2^{n}}$, it can be uniquely written as $x=yu$, where $y\in \mathbb{F}^{\ast}_{2^{k}}$, and $u\in U$. We have

\begin{eqnarray*}W_{F}(0,a+b\gamma)
             &\!=\!&\sum\limits_{y\in \mathbb{F}^{\ast}_{2^{k}},u\in U}(-1)^{{\rm Tr}^{n}_{1}(b\gamma (yu)^s)}+1\\
             &\!=\!&\sum\limits_{u\in U}\sum\limits_{y\in \mathbb{F}^{\ast}_{2^{k}}}(-1)^{{\rm Tr}^{n}_{1}(b\gamma y^{2d})}+1~(\text{recall~ that~} u^s=1,y^s=(y\overline{y})^d=y^{2d})\\
             &\!=\!&\sum\limits_{u\in U}\sum\limits_{y\in \mathbb{F}_{2^{k}}}(-1)^{{\rm Tr}^{n}_{1}(b\gamma y^{2d})}-2^{k}\\
             &\!=\!&\sum\limits_{u\in U}\sum\limits_{y\in \mathbb{F}_{2^{k}}}(-1)^{{\rm Tr}^{k}_{1}(b(\gamma+\overline{\gamma}))y^{2d})}-2^{k}.\\
             \end{eqnarray*}
Denote $\sum\limits_{y\in \mathbb{F}_{2^{k}}}(-1)^{{\rm Tr}^{k}_{1}(b(\gamma+\overline{\gamma}))y^{2d}}$ by $A$, for $b\neq 0$, we have
$W_{F}(0,a+b\gamma)=(2^k+1)A-2^k=2^{k}(A-1)+A=\pm 2^{k}$. It imply that $(2-A)2^{k}=A$ or $A(2^{k}+1)=0$. If the former holds, then $A\neq 2$, and by the fact $\frac{A}{2-A}=2^k$ is an integer we have $A=1$. However, it would imply that $2^{k}(2-A)=2^{k}=A=1$, a contradiction with the assumption that $k$ is a positive integer. Then $A=0$, and it means that $\sum\limits_{y\in \mathbb{F}_{2^{k}}}(-1)^{{\rm Tr}^{k}_{1}(b(\gamma+\overline{\gamma}))y^{2d}}=0$ for any $b\in \mathbb{F}^{\ast}_{2^{k}}$.
Noting that $(\gamma+\overline{\gamma})x$ is a bijective map from $\mathbb{F}^{\ast}_{2^{k}}$ to $\mathbb{F}^{\ast}_{2^{k}}$, we have for any $b\in \mathbb{F}^{\ast}_{2^{k}}$, $\sum\limits_{y\in \mathbb{F}_{2^{k}}}(-1)^{{\rm Tr}^{k}_{1}(by^{d})}=0$. Therefore, the function $y^d$ is a permutation on $\mathbb{F}_{2^{k}}$, that is, ${\rm gcd}(d,2^{k}-1)=1$. The last assertion can be seen from the fact that $F(x)\in \mathbb{F}_{2^{k}}$  for any $x\in \mathbb{F}_{2^{n}}$ and \cite[Proposition 3]{Pott}.\end{proof}

\begin{rmk} Let $n=2k$, $F(x)=x^{d(2^k+1)}$, ${\rm gcd}(d,2^k-1)=1$. It is easy to see the following two facts: 1) for any $\alpha \in \mathbb{F}_{2^{k}}$, ${\rm Tr}^{n}_{1}(\alpha F(x))$ is not bent; 2) the  Walsh spectrums are the same for ${\rm Tr}^{n}_{1}(\alpha F(x))$, ${\rm Tr}^{n}_{1}(\beta F(x))$, where  $\alpha,\beta \in \mathbb{F}_{2^{n}} \backslash \mathbb{F}_{2^{k}}$. Therefore, it suffices to consider only one component function ${\rm Tr}^{n}_{1}(\alpha F(x))$ for $\alpha \in \mathbb{F}_{2^{n}}\backslash \mathbb{F}_{2^{k}}$ when one wants to compute  the number of the bent components of $F(x)$. As it has been shown in \cite{Hell09} by Helleseth et al., if $d(2^l+1)\equiv 2^i~{\rm mod} (2^k-1)$ for some integers $l\geq 1$, $i\geq 0$~(in this case ${\rm gcd}(d,2^k-1)=1$), for any $\alpha \in \mathbb{F}_{2^{n}}\backslash \mathbb{F}_{2^{k}}$, $\{W_{F}(\lambda,\alpha)~|~ \lambda \in \mathbb{F}^{\ast}_{2^n} \}$ takes on either  exactly  three or four values depending on whether $l$  and $n$ are coprime or not. It means that for those $d$, $F(x)$ does not have any bent components. Note that if $l=0$, and $d(2^l+1)\equiv 2^i~{\rm mod} (2^k-1)$, then $d \equiv 2^j~{\rm mod} (2^k-1)$ for some $j$. In this case, $F(x)=x^{d(2^k+1)}=x^{2^{i}(2^k+1)}$ has $2^n-2^k$ bent components (Gold function), and $\{W_{F}(\lambda,\alpha)~|~ \lambda \in \mathbb{F}^{\ast}_{2^n} \}$ takes on exactly two values.  Based on those observations above, for $n>8$, they conjectured that if $\{W_{F}(\lambda,\alpha)~|~ \lambda \in \mathbb{F}^{\ast}_{2^n} \}$ takes on at most four values, then there must $l\geq0$, and $i\geq 0$ such that  $d(2^l+1)\equiv 2^i~{\rm mod} (2^k-1)$, see  the arguments above Remark 3 in \cite{Hell09} for more information.
\end{rmk}

It is an interesting and important problem to determine all the exponents $d$ such that $F(x)=x^{d(2^k+1)}$ has some bent components. Inspired by the above arguments, one may expect that $d=2^{i}$ are the only ones.

{\bf Open Problem 1.} Let $n=2k$, $d$ be a positive integer such that ${\rm gcd}(d,2^k-1)=1$. To show that if the $(n,n)$-function $F(x)=x^{d(2^k+1)}$ has some bent components, then $d \equiv 2^i~{\rm mod}~(2^k-1)$ for some $i\geq 0$.

\section{Nontrivial $(n, m)$-functions with maximal number of bent components and their properties}

It has been found that for the trivial $(n,m)$-functions with maximal number of bent components, their differential spectrum takes two values $0$ and $2^k$ only, and the Walsh spectrum takes the values $0, \pm2^k$ and $2^{2k}$ only. In this section, we are interested in the constructions of nontrivial $(n,m)$-functions with maximal number of bent components. Below we will present in bivariate form a large class of such functions, whose bent components belong to the Maiorana-McFarland class $\mathcal{M}$ of bent functions, and in univariate form a class of such functions which covers all the previously known nontrivial ones. \vspace{2mm}

\noindent {\bf A. A nontrivial construction } \vspace{2mm}

The following theorem gives a large class of nontrivial $(n,m)$-functions with maximal number of bent components.

\begin{thm}\label{keythm-1} Let $m, k$ be two positive integers such that $m \geq k$. Let $G(x,y): \mathbb{F}_{2^{k}}\times \mathbb{F}_{2^{k}}\rightarrow\mathbb{F}_{2^{k}}\times \mathbb{F}_{2^{m-k}}$ be defined by $(x,y)\mapsto (x^{2^i}\pi(y)+g(y), h(y))$, where $\pi(y)$ is a permutation over $\mathbb{F}_{2^{k}}$, and $g(y)$ and $h(y)$ are arbitrary functions from $\mathbb{F}_{2^{k}}$ to $\mathbb{F}_{2^{k}}$ and to $\mathbb{F}_{2^{m-k}}$, respectively. Then for any $(0,0)\neq (u,v) \in  \mathbb{F}_{2^{k}}\times \mathbb{F}_{2^{m-k}}$, the function $\langle (u,v),G(x,y)\rangle$ is a bent component of $G$ if and only if $u\neq 0$. In particular, the function $G$ admits the maximal number $2^{m}-2^{m-k}$ of bent components.
\end{thm}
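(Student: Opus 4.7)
The plan is to analyze each component $\langle(u,v),G(x,y)\rangle$ separately according to whether $u=0$ or $u\neq 0$, and to show that in the latter case every such component falls into the Maiorana--McFarland class $\mathcal{M}$, while in the former case it cannot be bent for a trivial dimension-counting reason. Fixing the inner products on $\mathbb{F}_{2^k}$ via $\mathrm{Tr}^k_1$, the component function takes the form
\begin{equation*}
f_{u,v}(x,y)\;=\;\mathrm{Tr}^k_1\bigl(u\,x^{2^i}\pi(y)\bigr)\;+\;\mathrm{Tr}^k_1\bigl(u\,g(y)\bigr)\;+\;\langle v,h(y)\rangle .
\end{equation*}

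First I would dispose of the case $u=0$, $v\neq 0$. Then $f_{0,v}(x,y)=\langle v,h(y)\rangle$ depends only on $y$, so for any $(\alpha,\beta)\in\mathbb{F}_{2^k}\times\mathbb{F}_{2^k}$ one has $W_{f_{0,v}}(\alpha,\beta)=\sum_x(-1)^{\langle\alpha,x\rangle}\sum_y(-1)^{\langle v,h(y)\rangle+\langle\beta,y\rangle}$, which vanishes whenever $\alpha\neq 0$. Hence the Walsh transform is not of constant modulus $2^k$, so $f_{0,v}$ is not bent.

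Next I would treat the main case $u\neq 0$. Using the fact that $\mathrm{Tr}^k_1$ is invariant under the Frobenius map, rewrite
\begin{equation*}
\mathrm{Tr}^k_1\bigl(u\,x^{2^i}\pi(y)\bigr)\;=\;\mathrm{Tr}^k_1\bigl(\tilde\pi(y)\,x\bigr),\qquad \tilde\pi(y):=\bigl(u\pi(y)\bigr)^{2^{k-i}}.
\end{equation*}
Because $\pi$ is a permutation of $\mathbb{F}_{2^k}$, $u\in\mathbb{F}_{2^k}^\ast$, and $z\mapsto z^{2^{k-i}}$ is a field automorphism, $\tilde\pi$ is again a permutation of $\mathbb{F}_{2^k}$. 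Therefore
\begin{equation*}
f_{u,v}(x,y)\;=\;\mathrm{Tr}^k_1\bigl(\tilde\pi(y)\,x\bigr)\;+\;\tilde g(y),
\end{equation*}
where $\tilde g(y):=\mathrm{Tr}^k_1(u\,g(y))+\langle v,h(y)\rangle$ is a Boolean function of $y$ alone. This is exactly the Maiorana--McFarland form, so $f_{u,v}$ is bent.

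Finally, counting the pairs $(u,v)\in\mathbb{F}_{2^k}\times\mathbb{F}_{2^{m-k}}$ with $u\neq 0$ yields $(2^k-1)\cdot 2^{m-k}=2^m-2^{m-k}$ bent components, matching the upper bound of Theorem \ref{number}, so $G$ attains the maximum. I do not anticipate a serious obstacle: the only mild subtlety is the rewriting via Frobenius to absorb the exponent $2^i$ and the factor $u$ into a single permutation of $\mathbb{F}_{2^k}$, after which the Maiorana--McFarland property is immediate from the quoted definition.
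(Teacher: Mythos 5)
Your proof is correct and follows essentially the same route as the paper: the components with $u\neq 0$ are identified as Maiorana--McFarland bent functions and counted against the upper bound of Theorem \ref{number}. The only (harmless) differences are that you rule out bentness for $u=0$, $v\neq 0$ by a direct Walsh-transform computation, whereas the paper deduces it from the upper bound, and that your Frobenius rewriting of $\mathrm{Tr}^k_1\bigl(u\,x^{2^i}\pi(y)\bigr)$ makes explicit a step the paper leaves implicit.
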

\begin{proof} For any $(0,0)\neq (u,v) \in  \mathbb{F}_{2^{k}}\times \mathbb{F}_{2^{m-k}}$, we have $\langle (u,v), G(x,y)\rangle=\langle u,
x^{2^{i}}\pi(y)+g(y)\rangle+\langle v,h(y)\rangle={\rm Tr}^{k}_{1}(u x^{2^{i}}\pi(y))+{\rm Tr}^{k}_{1}(u g(y))+{\rm Tr}^{m-k}_{1}(vh(y)):=f(x,y)$. If $u\neq 0$,  then $f(x,y)$ is in the class $\mathcal{M}$ of bent functions. Hence $G$ has at least $(2^k-1)2^{m-k}=2^m-2^{m-k}$ bent components. While according to Theorem \ref{number}, the number of bent components is at most $2^m-2^{m-k}$. Hence $f$ is bent if and only if $u\neq 0$, and $G$ has the maximal number of bent components.
\end{proof}

Below we consider the differential spectrum and Walsh spectrum of  the functions given by Theorem \ref{keythm-1}.

\begin{thm}\label{delta-1} Let $G(x,y)=(x^{2^i}\pi(y)+g(y),h(y))$, where $\pi(y)$ is a permutation over $\mathbb{F}_{2^{k}}$, and $g(y)$ and $h(y)$ are arbitrary $(k,k)$-function and $(k, m-k)$-function respectively. Then the differential spectrum of the function $G$ is given by
\begin{eqnarray*}\delta_{G}((u,v),(a,b))\!= \! \begin{cases} 0, &\text{if ~} v\!=\!0 \text{~and~ } b\!\neq\! 0,\\
2^{k}, &\text{if ~} v\!=\!0 \text{~and~ } b\!=\!0, \\  \delta_{h}(v,b),  &\text{otherwise. }
\end{cases}
\end{eqnarray*} In particular, $G$ is differentially $2^k$-uniform.
\end{thm}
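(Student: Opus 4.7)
The plan is to set up the derivative equation $G(x+u,y+v)+G(x,y)=(a,b)$ componentwise and exploit the fact that the second coordinate involves only $y$. Writing this out, the system becomes
\begin{equation*}
(x+u)^{2^{i}}\pi(y+v)+x^{2^{i}}\pi(y)+g(y+v)+g(y)=a,\qquad h(y+v)+h(y)=b.
\end{equation*}
Since $z\mapsto z^{2^{i}}$ is additive in characteristic $2$, the first equation simplifies to
\begin{equation*}
x^{2^{i}}\bigl(\pi(y+v)+\pi(y)\bigr)=a+u^{2^{i}}\pi(y+v)+g(y+v)+g(y).
\end{equation*}
The second equation, independent of $x$, contributes $\delta_{h}(v,b)$ admissible values of $y$.

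Next I would split into cases on $v$. When $v=0$, the $h$-equation forces $b=0$ (giving $\delta_{G}=0$ otherwise); and with $v=b=0$ the $x$-equation collapses to $u^{2^{i}}\pi(y)=a$. Because $(u,v)\neq(0,0)$ forces $u\neq 0$ in this case, and $\pi$ is a permutation of $\mathbb{F}_{2^{k}}$, there is exactly one $y$ solving this, while $x$ ranges freely over $\mathbb{F}_{2^{k}}$, yielding $2^{k}$ solutions, matching the claimed value. When $v\neq 0$, for each of the $\delta_{h}(v,b)$ admissible $y$ the coefficient $\pi(y+v)+\pi(y)$ appearing in the first equation is nonzero: since $\pi$ is a permutation and $v\neq 0$ implies $y+v\neq y$, injectivity gives $\pi(y+v)\neq\pi(y)$. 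Hence $x^{2^{i}}$ is uniquely determined, and since the Frobenius power is itself a permutation of $\mathbb{F}_{2^{k}}$, so is $x$. Thus each valid $y$ contributes exactly one $x$, and the total count is precisely $\delta_{h}(v,b)$.

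Finally, the differential uniformity bound follows immediately: the values exhibited are either $0$, $2^{k}$, or $\delta_{h}(v,b)\leq 2^{k}$, and the value $2^{k}$ is attained, for instance, at $((u,0),(a,0))$ with $u\neq 0$ by the first case. There is no real obstacle here; the only point to watch is the verification that $\pi(y+v)+\pi(y)\neq 0$ whenever $v\neq 0$, which relies purely on injectivity of $\pi$ together with characteristic $2$. All the remaining work is bookkeeping of the two coordinate equations under the additivity of the Frobenius map.
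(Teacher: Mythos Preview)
Your proposal is correct and follows essentially the same route as the paper: set up the derivative equation componentwise, split into the cases $v=0$ and $v\neq 0$, and use that $\pi$ is a permutation to solve for $y$ (when $v=0$) or to guarantee the coefficient of $x^{2^{i}}$ is nonzero (when $v\neq 0$). Your write-up is in fact slightly more careful, in that you explicitly invoke bijectivity of $z\mapsto z^{2^{i}}$ on $\mathbb{F}_{2^{k}}$ to pass from a unique $x^{2^{i}}$ to a unique $x$, and you note that $(u,v)\neq(0,0)$ forces $u\neq 0$ in the $v=0$ case.
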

\begin{proof} We have
\begin{eqnarray*}&&G(x,y)\!+\!G(x\!+\!u,y\!+\!v)\\
&\!=\!&(x^{2^{i}}\pi(y)+g(y),h(y))\!+\!(x\!+\!u)^{2^{i}}\pi(y\!+\!v)+g(y\!+\!v),h(y\!+\!v)) \\
             &\!=\!&(x^{2^{i}}\pi(y)\!+\!(x\!+\!u)^{2^{i}}\pi(y\!+\!v)\!+\!g(y)+\!g(y\!+\!v),h(y)\!+\!h(y\!+\!v))\\
             &\!=\!&((\pi(y)\!+\!\pi(y\!+\!v))x^{2^{i}}\!+\!u^{2^{i}}\pi(y\!+\!v)+\!g(y)+\!g(y\!+\!v),h(y)\!+\!h(y\!+\!v)).
             \end{eqnarray*}

Then $G(x,y)+G(x+u,y+v)=(a,b)$ is equivalent to
\begin{align}&((\pi(y)\!+\!\pi(y\!+\!v))x^{2^{i}}\!+\!u^{2^{i}}\pi(y\!+\!v)+\!g(y)+\!g(y\!+\!v)\!=\!a,\label{eq1}\\
&h(y)\!+\!h(y\!+\!v)\!=\!b. \label{eq2}
\end{align}
If $v=0$ (and then $u\neq 0$) and $b=0$, then (\ref{eq2}) holds for any $y\in \mathbb{F}_{2^{k}}$. Note that (\ref{eq1}) is reduced to $u^{2^{i}}\pi(y)=a$, which has exactly one solution $y=\pi^{-1}(au^{-2^{i}})$. Therefore, the solutions of $G(x,y)+G(x+u,y+v)=(a,b)$  are exactly those $(x,\pi^{-1}(au^{-2^{i}}))$ with $x\in \mathbb{F}_{2^{k}}$, and thus $\delta_{G}((u,0),(a,0))=2^{k}$.

If $v=0$ and $b\neq 0$,  (\ref{eq2}) has no solutions, then $\delta_{G}((u,0),(a,b))=0$.

When $v\neq 0$, one has $\pi(y)+\pi(y+v)\neq 0$, since $\pi$ is a permutation function. Then Eq. (\ref{eq1}) can be rewritten as $x^{2^{i}}=(a+u^{2^{i}}\pi(y+v)+g(y)+g(y+v))[\pi(y)+\pi(y+v)]^{-1}$, so that all the solutions to $G(x,y)+G(x+u,y+v)=(a,b)$ are $(((a+u^{2^{i}}\pi(y+v)+g(y)+g(y+v))[\pi(y)+\pi(y+v)]^{-1})^{2^{-i}},y)$, where $y$ satisfies (\ref{eq2}) . Therefore in this case $\delta_{G}((u,v),(a,b))=\delta_{h}(v,b)$.
\end{proof}
\begin{rmk}
One can see that at least most of the functions given by  Theorem \ref{keythm-1} are {\rm CCZ}-inequivalent to those trivial ones, since they have different differential spectrum. In fact, it is easy to see that the differential spectrum of a trivial function takes two values $0$ and $2^k$ only, and their Walsh spectrum takes the values $0, \pm2^k$ and $2^{2k}$ only.
\end{rmk}

\begin{prop}\label{walsh-1} Let $G(x,y)=(x^{2^i}\pi(y)+g(y),h(y))$, where $\pi(y)$ is a permutation over $\mathbb{F}_{2^{k}}$, and $g(y)$ and $h(y)$ are arbitrary $(k,k)$-function and $(k, m-k)$-function respectively. Then for any $(a,b)\in \mathbb{F}_{2^{k}}^2, (0,0)\neq(u,v)\in \mathbb{F}_{2^{k}}\times\mathbb{F}_{2^{m-k}}$, the Walsh spectrum of $F$ satisfies
\begin{eqnarray*}W_{G}((a,b),(u,v))\!=\!
\begin{cases}\!\pm\!~2^k, &{\rm~if~}u\!\neq\! 0,\\0, &{\rm~if~}u\!=\!0, a\neq 0,\\2^{k}W_{h}(b,v), &{\rm~otherwise.}
\end{cases}
\end{eqnarray*} Moreover, $G$ is plateaued if and only if $h$ is plateaued.
\end{prop}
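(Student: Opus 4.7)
The plan is to compute the Walsh transform $W_G((a,b),(u,v))$ by direct expansion and split into the three cases according to whether $u = 0$ and whether $a = 0$.

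First I would write
\begin{equation*}
W_G((a,b),(u,v)) = \sum_{(x,y)\in\mathbb{F}_{2^k}^2}(-1)^{{\rm Tr}^k_1(ax) + {\rm Tr}^k_1(by) + {\rm Tr}^k_1(u(x^{2^i}\pi(y)+g(y))) + {\rm Tr}^{m-k}_1(vh(y))}.
\end{equation*}
The case $u\ne 0$ is immediate from Theorem~\ref{keythm-1}: the component $\langle(u,v),G(x,y)\rangle$ is then a bent Boolean function in $n = 2k$ variables, hence its Walsh transform takes only the values $\pm 2^{n/2}=\pm 2^k$.

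When $u=0$, the inner character sum over $x$ decouples from the rest, giving
\begin{equation*}
W_G((a,b),(0,v)) = \Big(\sum_{x\in\mathbb{F}_{2^k}}(-1)^{{\rm Tr}^k_1(ax)}\Big)\Big(\sum_{y\in\mathbb{F}_{2^k}}(-1)^{{\rm Tr}^k_1(by)+{\rm Tr}^{m-k}_1(vh(y))}\Big).
\end{equation*}
The first factor vanishes when $a\ne 0$ and equals $2^k$ when $a=0$, while the second factor is by definition $W_h(b,v)$. This yields the three cases in the statement.

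For the plateaued characterization, I would argue component by component. Every component with $u\ne 0$ is bent, hence plateaued. For a component with $u=0$ and $v\ne 0$, the Walsh values are $0$ whenever $a\ne 0$ and $2^k W_h(b,v)$ when $a=0$; such a component is plateaued if and only if $|W_h(b,v)|$ takes at most one nonzero value as $b$ varies, which is precisely the statement that the component ${\rm Tr}^{m-k}_1(vh(y))$ of $h$ is plateaued. Since this equivalence holds for every $v\in\mathbb{F}_{2^{m-k}}^*$, $G$ is plateaued if and only if $h$ is. There are no real obstacles here; the main thing to watch is correctly tracking the three cases and making sure the equivalence between $G$ being plateaued and $h$ being plateaued is phrased through the components of $h$ rather than pointwise on $W_h$.
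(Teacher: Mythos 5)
Your proof is correct and follows essentially the same route as the paper: expand the Walsh sum, observe that the case $u\neq 0$ is immediate from Theorem~\ref{keythm-1} (bent components), and for $u=0$ decouple the sum over $x$ to get the factor $2^k$ or $0$ times $W_h(b,v)$. In fact you go slightly further than the paper, whose proof stops after the case analysis and never argues the ``plateaued iff $h$ plateaued'' claim; your component-by-component argument for that part is correct and fills the gap.
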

\begin{proof}  Let $(a,b) \in \mathbb{F}_{2^{k}}\times \mathbb{F}_{2^{k}}$. For any $(0, 0)\neq (u,v) \in  \mathbb{F}_{2^{k}}\times \mathbb{F}_{2^{m-k}}$, we have
\begin{eqnarray*}\langle (u,v), G(x,y)\rangle &\!=\!& \langle u,
x^{2^{i}}\pi(y)\!+\!g(y)\rangle\!+\!\langle v, h(y)\rangle\\
& \!=\!& {\rm Tr}^{k}_{1}(u x^{2^{i}}\pi(y))\!+\!{\rm Tr}^{k}_{1}(u g(y))\!+\!{\rm Tr}^{m-k}_{1}(vh(y)):=f(x,y).
\end{eqnarray*}
One has
$$\arr{lll}{W_{f}(a,b)&\!= \! \sum\limits_{(x,y)\in \mathbb{F}_{2^{k}}\times \mathbb{F}_{2^{k}}}(\!-\!1)^{f(x,y)\!+\!{\rm Tr}^{k}_{1}(ax\!+\!by)}  \\
             &\!= \!\sum\limits_{y\in  \mathbb{F}_{2^{k}} }(\!-\!1)^{{\rm Tr}^{m-k}_{1}(vh(y))+{\rm Tr}^{k}_{1}(ug(y))\!+\!{\rm Tr}^{k}_{1}(by)}\sum\limits_{x\in  \mathbb{F}_{2^{k}}}(\!-\!1)^{{\rm Tr}^{k}_{1}(((u\pi(y))^{2^{k-i}}\!+\!a)x)}\\
             &\!=\! 2^{k}\sum\limits_{u\pi(y)=a^{2^i}}(\!-\!1)^{{\rm Tr}^{m-k}_{1}(vh(y))\!+\!{\rm Tr}^{k}_{1}(by)+{\rm Tr}^{k}_{1}(ug(y))}. \\
             }$$
By Theorem \ref{keythm-1}, $f(x,y)$ is bent if and only if $u\neq 0$. Hence when $u\neq 0$ it holds $W_{G}((a,b),(u,v))=\pm 2^k$.

If $u=0$ and $v\neq 0$, then for $a=0$, the solutions to $u\pi(y)=a^{2^i}$ are exactly all $y$ in $\mathbb{F}_{2^{k}}$. Therefore in this case $W_{f}(a,b)=2^{k}\sum\limits_{y\in\mathbb{F}_{2^{k}}}(-1)^{{\rm Tr}^{m-k}_{1}(vh(y))+{\rm Tr}^{k}_{1}(by)}=2^{k}W_{h}(b,v).$ While for $a\neq 0$, there is no solution to $u\pi(y)=a^{2^i}$. Hence $W_{f}(a,b)=0$.
 \end{proof}

\noindent {\bf B. More on $(n,n)$-functions}\vspace{2mm}

In \cite{Pott}, Pott et al. presented in univariate form an infinite class of $(2k,2k)$-functions $x^{2^{i}}(x+x^{2^{k}})$ with the maximal number of bent components. They noticed that these functions are of the form $F(x)=x^{2^{i}}\mathcal{L}(x)$ for some linear mapping $\mathcal{L}(x)=x+x^{2^{k}}$. Then they raised an open question (Question 2 in \cite{Pott}) of determining all those linear mappings $\mathcal{L}$ over $\mathbb{F}_{2^{2k}}$ such that the function $x\mathcal{L}(x)$ has the  maximal number of bent components. In the following, we completely determine under which conditions a large subclass of such $(n,n)$-functions will have the maximum number of bent components, and thus partially answer the open question. {\color{red} }

 It should be also noted that the $(n,n)$-functions constructed by \cite{MZTZ,Pott} are all quadratic. The proofs for that those functions have the maximum number of bent components are long, and they rely on the theory of quadratic forms. It seems that to find such functions, perhaps one has to examine the quadratic functions first.  However, in this paper, we break this impression and provide a direct and much simpler proof for the maximality of the number of bent components of the functions considered in the following theorem, which cover all the nontrivial functions in \cite{MZTZ,Pott} according to the arguments of Remark \ref{cover} below.

\begin{thm}\label{keythm-2} Let $n=2k$ be a positive integer, $e$ a positive divisor of $k$. Let $F^{i}(x)=x^{2^{i}}h({\rm Tr}^{n}_{e}(x))$, where $i$ is a nonnegative integer, $h:\mathbb{F}_{2^{e}}\rightarrow \mathbb{F}_{2^{e}}$ is an arbitrary function. Then  $F^{i}$ has the maximal number of bent components if and only if $e=k$ and $\pi$ is a permutation over $\mathbb{F}_{2^{e}}$. Moreover, when $e=k$ and $h$ is a permutation over $\mathbb{F}_{2^{e}}$, for any $\alpha \in \mathbb{F}_{2^{n}}$, ${\rm Tr}^{n}_{1}(\alpha F^{i})$ is a bent component of $F^{i}$ if and only if $\alpha \not\in \mathbb{F}_{2^{k}}$. For other cases, $F^i$ does not have any bent component.
\end{thm}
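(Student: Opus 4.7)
The plan is to compute $W_{F^{i}}(\lambda,\alpha)$ explicitly by partitioning the sum over $x$ according to the value $t={\rm Tr}^{n}_{e}(x)$, then analyze when $|W_{F^{i}}(\lambda,\alpha)|=2^{k}$ for every $\lambda$. For fixed $t$, the fiber $\{x:{\rm Tr}^{n}_{e}(x)=t\}$ is a coset of $K=\ker{\rm Tr}^{n}_{e}$, which has $\mathbb{F}_{2}$-dimension $n-e$. Since $(x_{t}+y)^{2^{i}}=x_{t}^{2^{i}}+y^{2^{i}}$ in characteristic $2$ and ${\rm Tr}^{n}_{1}(\alpha h(t)y^{2^{i}})={\rm Tr}^{n}_{1}((\alpha h(t))^{2^{n-i}}y)$, the inner sum over the coset factors as a constant phase times $\sum_{y\in K}(-1)^{{\rm Tr}^{n}_{1}(((\alpha h(t))^{2^{n-i}}+\lambda)y)}$. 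Using that the dual of $K$ under the absolute-trace pairing equals $\mathbb{F}_{2^{e}}$, this sum is $2^{n-e}$ when $(\alpha h(t))^{2^{n-i}}+\lambda\in\mathbb{F}_{2^{e}}$ and $0$ otherwise, or equivalently (applying the $2^{i}$-power Frobenius) when $t\in T_{\lambda,\alpha}:=\{t\in\mathbb{F}_{2^{e}}:\alpha h(t)+\lambda^{2^{i}}\in\mathbb{F}_{2^{e}}\}$. One obtains
$$W_{F^{i}}(\lambda,\alpha)=2^{n-e}\sum_{t\in T_{\lambda,\alpha}}\varepsilon(t),\qquad \varepsilon(t)\in\{\pm 1\}.$$

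If $e$ is a proper divisor of $k$, then $2^{n-e}>2^{n-k}=2^{k}$, so every nonzero value of $W_{F^{i}}(\lambda,\alpha)$ has magnitude at least $2^{n-e}>2^{k}$. By Parseval, $W_{F^{i}}(\cdot,\alpha)$ is not identically zero for any $\alpha\neq 0$, so no component can satisfy $|W_{F^{i}}(\lambda,\alpha)|=2^{k}$ for every $\lambda$. Hence $F^{i}$ has no bent component in this case, ruling out both maximality and the existence of any bent component.

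When $e=k$, set $\alpha'=\alpha+\alpha^{2^{k}}$ and $L=\lambda+\lambda^{2^{k}}$. Using $h(t)\in\mathbb{F}_{2^{k}}$ and equating $\alpha h(t)+\lambda^{2^{i}}$ with its $2^{k}$-th power, the defining condition for $T_{\lambda,\alpha}$ rewrites as $\alpha' h(t)=L^{2^{i}}$. If $\alpha\in\mathbb{F}_{2^{k}}$, then $\alpha'=0$ forces $L=0$, so $T_{\lambda,\alpha}$ is empty for every $\lambda\notin\mathbb{F}_{2^{k}}$, giving $W_{F^{i}}(\lambda,\alpha)=0$; such components are not bent. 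If $\alpha\notin\mathbb{F}_{2^{k}}$, then $\alpha'\in\mathbb{F}_{2^{k}}^{*}$ and the condition becomes $h(t)=L^{2^{i}}/\alpha'$. Since $L={\rm Tr}^{n}_{k}(\lambda)$ is surjective onto $\mathbb{F}_{2^{k}}$, the value $c=L^{2^{i}}/\alpha'$ ranges over all of $\mathbb{F}_{2^{k}}$ as $\lambda$ varies. Thus $|T_{\lambda,\alpha}|=|h^{-1}(c)|$: when $h$ is a permutation, $|T_{\lambda,\alpha}|=1$ for every $\lambda$ and $W_{F^{i}}(\lambda,\alpha)=\pm 2^{k}$, so the component is bent; when $h$ is not a permutation, some $c_{0}\in\mathbb{F}_{2^{k}}$ lies outside the image of $h$, and any $\lambda$ with $L^{2^{i}}/\alpha'=c_{0}$ yields $W_{F^{i}}(\lambda,\alpha)=0$, so for every $\alpha\notin\mathbb{F}_{2^{k}}$ the corresponding component fails to be bent.

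Combining the three scenarios proves both the iff characterization of maximality and the precise description of the bent components. The main obstacle is the Walsh transform computation itself: once the prefactor $2^{n-e}$ and the arithmetic condition defining $T_{\lambda,\alpha}$ are extracted via the dual-subspace identification $K^{\perp}=\mathbb{F}_{2^{e}}$, the case $e<k$ collapses into a pure size mismatch and the case $e=k$ reduces cleanly to counting preimages of $h$. The main subtlety is therefore organizing the Frobenius manipulations so that $T_{\lambda,\alpha}$ takes the transparent form $\{t:\alpha h(t)+\lambda^{2^{i}}\in\mathbb{F}_{2^{e}}\}$, in which the role of $h$ being a permutation on $\mathbb{F}_{2^{k}}$ and the distinction between $\alpha\in\mathbb{F}_{2^{k}}$ and $\alpha\notin\mathbb{F}_{2^{k}}$ become manifest.
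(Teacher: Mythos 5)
Your proof is correct, and its core is the same as the paper's: compute $W_{F^i}(\lambda,\alpha)$ by a coset decomposition so that the inner character sum vanishes unless a membership condition holds, then reduce that condition (for $e=k$) to the solvability of $(\alpha+\alpha^{2^k})h(t)=({\rm Tr}^n_k(\lambda))^{2^i}$, which is governed exactly by whether $h$ is a permutation. The two genuine (if minor) differences are: (i) you fiber over $t={\rm Tr}^n_e(x)$, i.e.\ over cosets of $\ker{\rm Tr}^n_e$ of dimension $n-e$, whereas the paper fibers over cosets of $\mathbb{F}_{2^k}$ regardless of $e$ — the two coincide only when $e=k$; and (ii) for $e<k$ the paper exhibits an explicit $\beta_0$ with $W_{F^i}(\beta_0,\alpha)=0$, whereas you observe that every Walsh value is a multiple of $2^{n-e}>2^k$ and invoke Parseval to exclude the identically-zero case. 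Your divisibility argument is arguably cleaner and dispatches the $e<k$ case without having to produce a witness, while the paper's version gives slightly more information (the location of a Walsh zero); both are complete, and your count of $2^n-2^k$ bent components for $\alpha\notin\mathbb{F}_{2^k}$ combined with Theorem \ref{number} correctly yields the maximality claim.
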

\begin{proof}
For any $x\in\mathbb{F}_{2^{n}}$, write $x=g+y$ such that $y\in \mathbb{F}_{2^{k}}$ and $g\in\mathcal{G}$, where $\mathcal{G}$ is a set of representatives of cosets of $\mathbb{F}_{2^{k}}$ in $\mathbb{F}_{2^{n}}$, say $\mathbb{F}_{2^{n}}=\cup_{g\in \mathcal{G}}(g+\mathbb{F}_{2^{k}})$ and it is a disjoint union. Denote $x^{2^{k}}$ by $\overline{x}$, then we have for any $(\beta, \alpha) \in \mathbb{F}_{2^{n}}\times\mathbb{F}_{2^{n}}^*$ that $W_{F^{i}}(\beta,\alpha)$ equals
\begin{eqnarray*}
&&\sum\limits_{x\in \mathbb{F}_{2^{n}}}(-1)^{{\rm Tr}^{n}_{1}(\alpha F^{i}(x))+{\rm Tr}^{n}_{1}(\beta x)} \\
&=&\sum\limits_{x\in \mathbb{F}_{2^{n}}}(-1)^{{\rm Tr}^{k}_{1}({\rm Tr}^{n}_{k}(\alpha F^{i}(x)))+{\rm Tr}^{k}_{1}({\rm Tr}^{n}_{k}(\beta x))} \\
&=&\sum\limits_{g\in \mathcal{G}, y\in  \mathbb{F}_{2^{k}}}(-1)^{{\rm Tr}^{k}_{1}(h({\rm Tr}^{k}_{e}(g+\overline{g}))((\alpha g^{2^{i}}+\overline{\alpha g^{2^{i}}})+(\alpha + \overline{\alpha})y^{2^{i}}))+{\rm Tr}^{k}_{1}((\beta g+\overline{\beta g})+(\beta+\overline{\beta})y)}\\
&=&\sum\limits_{g\in\mathcal{G}}(-1)^{{\rm Tr}^{k}_{1}(h({\rm Tr}^{k}_{e}(g+\overline{g}))(\alpha g^{2^{i}}+\overline{\alpha g^{2^{i}}})+(\beta g+\overline{\beta g}))}\sum\limits_{y\in \mathbb{F}_{2^{k}}}(-1)^{{\rm Tr}^{k}_{1}(((\alpha+\overline{\alpha})h({\rm Tr}^{k}_{e}(g+\overline{g})))^{2^{k-i}}+(\beta+\overline{\beta}))y)}\\
&=&2^{k}\sum\limits_{(\alpha+\overline{\alpha})\pi({\rm Tr}^{k}_{e}(g+\overline{g}))=(\beta+\overline{\beta})^{2^{i}}}(-1)^{{\rm Tr}^{k}_{1}(h({\rm Tr}^{k}_{e}(g+\overline{g}))(\alpha g^{2^{i}}+\overline{\alpha g^{2^{i}}})+(\beta g+\overline{\beta g}))}.
\end{eqnarray*}

 If $\alpha+\overline{\alpha}=0$, say $\alpha \in \mathbb{F}_{2^{k}}$, let $\beta$ be such that $\beta+\overline{\beta}\neq 0$, then $W_{F^{i}}(\beta,\alpha)=0$ as $(\alpha+\overline{\alpha})h({\rm Tr}^{k}_{e}(g+\overline{g}))=(\beta+\overline{\beta})^{2^{i}}$ has no solution. It implies that $\langle \alpha ,F^{i}\rangle$ is not bent. Therefore, by Theorem \ref{number}, $F^{i}$ has the maximal number of bent components if and only if for any $\alpha\not\in \mathbb{F}_{2^{k}}$, $\langle \alpha ,F^{i}\rangle$ is bent.

Now suppose $\alpha\not\in \mathbb{F}_{2^{k}}$, then $(\alpha+\overline{\alpha})h({\rm Tr}^{k}_{e}(g+\overline{g}))=(\beta+\overline{\beta})^{2^{i}}$ is equivalent to $h({\rm Tr}^{k}_{e}(g+\overline{g}))=(\beta+\overline{\beta})^{2^{i}}(\alpha+\overline{\alpha})^{-1}$. If $e=k$ and $h$ is a permutation over $\mathbb{F}_{2^{e}}=\mathbb{F}_{2^{k}}$, then the equation above reduces to $g+\overline{g}=h^{-1}((\beta+\overline{\beta})^{2^{i}}(\alpha+\overline{\alpha})^{-1})$. However,  the mapping $g\mapsto g+\overline{g}$ is a bijection from $\mathcal{G}$ to $\mathbb{F}_{2^{k}}$, hence there is exactly one solution $g$ to the equation.  It indicates that $W_{F^{i}}(\beta,\alpha)=\pm 2^k$ for any $\beta$, and thus $\langle \alpha ,F^{i}\rangle$ is bent.

If $e<k$, then for any fixed $\alpha \not\in \mathbb{F}_{2^{k}}$, by the fact $\mathbb{F}_{2^{e}}\subsetneqq \mathbb{F}_{2^{k}}$ and $(\alpha+\overline{\alpha})^{-1}({\rm Tr}^{n}_{k}(x))^{2^{i}}$ is surjective from $\mathbb{F}_{2^{n}}$ to $\mathbb{F}_{2^{k}}$, there must exist $\beta_{0}\in \mathbb{F}_{2^{n}}$ such that $(\alpha+\overline{\alpha})^{-1}(\beta_{0}+\overline{\beta_{0}})^{2^{i}}\not\in \mathbb{F}_{2^{e}}$. However, $h$ is a map from $\mathbb{F}_{2^{e}}$ to $\mathbb{F}_{2^{e}}$, it implies that $W_{F^{i}}(\beta_{0},\alpha)=0$, and hence $\langle \alpha ,F^{i}\rangle$ is not bent. Therefore  $F^{i}$ cannot have any bent component if $e<k$.

Finally suppose $e=k$, and $h$ is not a permutation over  $\mathbb{F}_{2^{e}}$, the equation reduces to $h(g+\overline{g})=(\beta+\overline{\beta})^{2^{i}}(\alpha+\overline{\alpha})^{-1}$. However, there must exist $\gamma \in \mathbb{F}_{2^{k}}$ such that its pre-image set $h^{-1}(\gamma)$ is empty. Then by the arguments as above, for any fixed $\alpha \not\in \mathbb{F}_{2^{k}}$, there must exist $\beta\in \mathbb{F}_{2^{n}}$ such that $(\alpha+\overline{\alpha})^{-1}(\beta+\overline{\beta})^{2^{i}}=\gamma$, and  then $W_{F^{i}}(\beta,\alpha)=0$. Therefore  $F^{i}$ cannot have any bent component. \end{proof}

Let $n=2k$, Proposition 3 in \cite{Pott} says that for an $(n,n)$-function $F$, if ${\rm Tr}^{n}_{1}(\alpha F(x))$ is a bent function for any $\alpha\in \mathbb{F}_{2^{n}}\backslash \mathbb{F}_{2^{k}}$, then ${\rm Tr}^{n}_{k}(\alpha F(x))$ is a vectorial bent function for any $\alpha\in \mathbb{F}_{2^{n}}\backslash \mathbb{F}_{2^{k}}$. By this observation together with Theorem \ref{keythm-2}, one can get the following result.

\begin{coro} Let $n=2k$ be a positive integer, $i$ a nonnegative integer. Let $h$ be a permutation over $\mathbb{F}_{2^{k}}$. Then the function
          ${\rm Tr}^{n}_{k}(\alpha x^{2^{i}}h({\rm Tr}^{n}_{k}(x)))$
is vectorial bent for any $\alpha \not\in \mathbb{F}_{2^{k}}$.
\end{coro}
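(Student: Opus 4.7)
The plan is to combine Theorem \ref{keythm-2} with the reduction of \cite[Proposition 3]{Pott} that has been exploited repeatedly in this section. First, I would invoke Theorem \ref{keythm-2} specialized to $e=k$: since $h$ is assumed to be a permutation over $\mathbb{F}_{2^{k}}$, it yields that for every $\alpha\in\mathbb{F}_{2^{n}}\setminus\mathbb{F}_{2^{k}}$ the Boolean function ${\rm Tr}^{n}_{1}(\alpha F^{i}(x))$ is bent, where $F^{i}(x)=x^{2^{i}}h({\rm Tr}^{n}_{k}(x))$. This is the only structural input about $F^{i}$ that the argument requires.

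Next, to deduce the vectorial bent property of the $(n,k)$-function $G_{\alpha}(x):={\rm Tr}^{n}_{k}(\alpha F^{i}(x))$, I would check directly that every nonzero component of $G_{\alpha}$ is bent. Such a component has the form ${\rm Tr}^{k}_{1}(\beta\, G_{\alpha}(x))$ with $\beta\in\mathbb{F}_{2^{k}}^{\ast}$, and by the transitivity of the trace one has
\[
{\rm Tr}^{k}_{1}\bigl(\beta\,{\rm Tr}^{n}_{k}(\alpha F^{i}(x))\bigr)={\rm Tr}^{n}_{1}(\beta\alpha F^{i}(x)).
\]
The crucial observation is that $\beta\alpha$ still lies in $\mathbb{F}_{2^{n}}\setminus\mathbb{F}_{2^{k}}$: otherwise $\alpha=\beta^{-1}(\beta\alpha)\in\mathbb{F}_{2^{k}}$, contradicting the assumption on $\alpha$. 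Hence Theorem \ref{keythm-2} applied at $\beta\alpha$ shows that ${\rm Tr}^{n}_{1}(\beta\alpha F^{i}(x))$ is bent, which is exactly the statement that the component ${\rm Tr}^{k}_{1}(\beta\, G_{\alpha}(x))$ is bent.

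Since this holds for every $\beta\in\mathbb{F}_{2^{k}}^{\ast}$, all nonzero components of $G_{\alpha}$ are bent, so $G_{\alpha}$ is vectorial bent, which is the desired conclusion. There is no substantive obstacle: once Theorem \ref{keythm-2} is in hand, the corollary follows by a two-line trace manipulation, and the only subtle point is the closure of $\mathbb{F}_{2^{n}}\setminus\mathbb{F}_{2^{k}}$ under multiplication by $\mathbb{F}_{2^{k}}^{\ast}$, which is immediate from $\mathbb{F}_{2^{k}}$ being a subfield of $\mathbb{F}_{2^{n}}$.
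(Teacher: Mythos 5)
Your proposal is correct and follows essentially the same route as the paper, which derives the corollary by combining Theorem \ref{keythm-2} with \cite[Proposition 3]{Pott}; the only difference is that you inline a (correct) proof of that cited reduction---namely that $\beta\alpha\notin\mathbb{F}_{2^{k}}$ for $\beta\in\mathbb{F}_{2^{k}}^{\ast}$ and the trace identity ${\rm Tr}^{k}_{1}(\beta\,{\rm Tr}^{n}_{k}(y))={\rm Tr}^{n}_{1}(\beta y)$---rather than citing it as a black box. This makes your write-up more self-contained but does not change the argument.
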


\begin{rmk} The function $F^{i}(x)=x^{2^{i}}h({\rm Tr}^{n}_{k}(x))$, where $h$ is a permutation over $\mathbb{F}_{2^{k}}$, can possess a high algebraic degree. This class of functions provides the first nontrivial example of nonquadratic functions in univariate form with the maximal number of bent components. Some of such functions can reach the optimal algebraic degree $k$. For example, the algebraic degree of $x({\rm Tr}^{n}_{k}(x))^{2^{k-1}-1}$ is equal to $k$. \qed
\end{rmk}

\begin{rmk}\label{cover} Let $h(x)=x$ for any $x\in \mathbb{F}_{2^{k}}$. Then $h$ is a permutation over $\mathbb{F}_{2^{k}}$, and the functions $F^{i}(x)$ are just the $(n,n)$-functions $x^{2^{i}}(x+x^{2^k})$ in \cite[Theorem 4]{Pott}. Note that for a linear map $\mathcal{L}(x)$ over $\mathbb{F}_{2^{k}}$, it is a permutation of $\mathbb{F}_{2^{k}}$ if and only if $\mathcal{L}(x)=0$ has exactly only one solution $x=0$. Let $h(x)=\mathcal{L}(x)$ be a linear permutation over $\mathbb{F}_{2^{k}}$, then the functions $F^{i}(x)$ are just the $(n,n)$-functions $H(x)$ in \cite[Theorem 9]{MZTZ} which is a main result of \cite{MZTZ}.
\end{rmk}

\begin{rmk} Theorem \ref{keythm-2} partially answers Open question 2 in \cite{Pott}. Indeed, if $\mathcal{L}(x)$ is of the form $L({\rm Tr}^{n}_{k}(x))$ with $L(x)$ a linear map over $\mathbb{F}_{2^{k}}$, then $F(x)=x\mathcal{L}(x)=xL({\rm Tr}^{n}_{k}(x))$ has the maximal number of bent components if and only if $L(x)$ is a permutation over $\mathbb{F}_{2^{k}}$. It is an interesting problem to find all such functions $\mathcal{L}$, and we leave it as another open problem here.
\end{rmk}

\section{Conclusions}

We specified the bent components of an infinite class of $(n,m)$-functions in bivariate form $G(x,y)=(x^{2^{i}}\pi(y)+g(y),h(y)): \mathbb{F}_{2^{k}}\times \mathbb{F}_{2^{k}}\rightarrow\mathbb{F}_{2^{k}}\times \mathbb{F}_{2^{m-k}}$, here $\pi(y)$ is a permutation over $\mathbb{F}_{2^{k}}$, $g(y)$ is an arbitrary function from $\mathbb{F}_{2^{k}}$ to $\mathbb{F}_{2^{k}}$, and $h(y)$ is an arbitrary function from $\mathbb{F}_{2^{k}}$ to $\mathbb{F}_{2^{m-k}}$. This is the first time a class of nontrivial $(n,m)$-functions in bivariate form with maximum number of bent components were constructed.

For the $(n,n)$-function, $n=2k$, we consider two types of those functions. One is the power $(n,n)$-function $F(x)=x^s$, we showed that it admits the maximal number of bent components if and only if it is a vectorial bent $(n,k)$-function under a mild condition. The other one is of the form $F^{i}(x)=x^{2^{i}}h({\rm Tr}^{n}_{e}(x))$, where $e$ is any divisor of $k$ and $h: \mathbb{F}_{2^{e}}\rightarrow \mathbb{F}_{2^{e}}$. We showed that the functions $F^{i}(x)$ have maximum bent components if and only if $e=k$, and $h$ is a permutation over $\mathbb{F}_{2^{e}}$. An infinite class of vectorial bent functions of the form ${\rm Tr}^{n}_{k}(\alpha F^{i}(x))$ was specified, where $\alpha$ is a primitive element of $\mathbb{F}_{2^{n}}$.

%
%

\end{document}